\newtheorem{observation}[theorem]{Observation}
\newcommand{\problemdef}[3]{
\vspace{0.3cm}
\begin{tcolorbox}[enhanced,attach boxed title to top left={yshift=-3.5mm,yshifttext=0mm,xshift=3mm}, colback=gray!3,colframe=black,colbacktitle=white,boxrule=0.5pt,
  title=\textsc{#1},coltitle=black,fonttitle=\bfseries,
  boxed title style={size=small,colframe=black, boxrule=0.5pt,colback=white} ]
  \vspace{0.3cm}
  \textbf{Given:} \hspace{1.5em} #2 \\[4pt]
  \textbf{Question:} \hspace{0em} #3
\end{tcolorbox}
\vspace{0.3cm}
}
\begin{document}

\title{On the Complexity of Claw-Free Vertex Splitting}

\titlerunning{On the Complexity of Claw-Free Vertex Splitting}

\author{Faisal N. Abu-Khzam \and
Sergio Thoumi}
\authorrunning{F. N. Abu-Khzam and S. Thoumi}

\institute{Department of Computer Science and Mathematics\\ 
Lebanese American University\\
Beirut, Lebanon\\
\email{\{faisal.abukhzam,sergio.thoumi\}@lau.edu.lb}}
\maketitle

\begin{abstract}
Vertex splitting consists of taking a vertex $v$ in a graph and replacing it with two non-adjacent vertices whose combined neighborhoods is the neighborhood of $v$. The split is said to be exclusive when these neighborhoods are disjoint.
In the \textsc{Claw-Free (Exclusive) Vertex Splitting} problem, we are given a graph $G$ and an integer $k$, and we are asked if we can perform at most $k$ (exclusive) vertex splits to obtain a claw-free graph. 
We consider the complexity of \textsc{Claw-Free Exclusive Vertex Splitting} and prove it to be $\NP$-complete in general, while admitting a polynomial-time algorithm when the input graph has maximum degree 4. This result settles an open problem posed in [Firbas \& Sorge, ISAAC 2024]. We also show that our results can be generalized to \textsc{$K_{1,c}$-Free Vertex Splitting} for all $c \geq 3$. 
\end{abstract}

\section{Introduction}

A graph property $\Pi$ is \emph{hereditary} if, whenever a graph $G$
satisfies $\Pi$, every induced subgraph of $G$ also satisfies it.
A hereditary property is non-trivial if infinitely many graphs satisfy it and infinitely many graphs do not. Applying a sequence of modifications to obtain a graph with a given (non-trivial) hereditary property is a well-studied type of problem in graph algorithms. 
Numerous properties and modification operations have been considered. For the \emph{vertex deletion} operation, the problem of finding a minimum set of vertices whose deletion results in a graph that satisfies $\Pi$ is $\NP$-complete for \emph{any} non-trivial hereditary property $\Pi$ \cite{lewis1980node}. A classical example is the \textsc{Feedback Vertex Set} problem, which corresponds to vertex deletion into a forest. Moreover, \emph{vertex deletion} into hereditary properties is fixed-parameter tractable ($\FPT$) if the hereditary property can be characterized by a finite set of forbidden induced subgraphs \cite{cai1996fixed}. In contrast, when the allowed operation is \emph{edge deletion}, the computational complexity varies depending on the target property. For instance, finding a minimum set of edges to delete to obtain a forest is polynomial-time solvable \cite{koch2024edge}. However, edge deletion into a cactus graph is $\NP$-complete \cite{koch2024edge}. 

Recently, vertex splitting into hereditary properties ($\Pi$-VS) started to garner more interest \cite{nollenburg2025planarizing,firbas2023establishing,firbas2024complexity,baumann2024parameterized,splittingbipartite,planar2layer}. Vertex splitting takes a vertex $v$ and replaces it by two new non-adjacent vertices $v_1,v_2$ such that $N(v)=N(v_1)\cup N(v_2)$, where $N(v)$ is the set of vertices adjacent to $v$. If the new neighborhoods are disjoint (i.e. $N(v_1) \cap N(v_2) = \emptyset$), the operation is known as an \emph{exclusive} vertex splitting. Otherwise, it is said to be 
\emph{inclusive}. 
Exclusive vertex splits are particularly of interest due to their usage in application domains such as correlation clustering \cite{AbuIsTh2025,amin}. As with edge deletion, the complexity of the problem was found to depend on the specific property under consideration \cite{firbas2023establishing,firbas2024complexity}. 
While the complexity of $\Pi$-VS has been established for most hereditary properties, the problem remains unclassified for some important properties. One of the remaining open problems is splitting into a claw-free graph (\textsc{Claw-Free Vertex Splitting}) \cite{firbas2024complexity}. A graph is said to be \emph{claw-free} if it does not contain an \emph{induced} subgraph that has a vertex with three pairwise non-adjacent neighbors. 

In this paper, we investigate the complexity of splitting into $K_{1,c}$-free graphs. This is a generalization of \textsc{Claw-Free Vertex Splitting} which corresponds to the case where $c=3$. Our first main result establishes that \emph{exclusive vertex splitting} into a $K_{1,c}$-free graph is $\NP$-complete for all $c \geq 3$. We also present a polynomial-time algorithm for \textsc{$K_{1,3}$-Free Vertex Splitting} on bounded degree 4 graphs. We believe the presented results take us a step closer towards defining the jagged line between easy and hard problems in $\Pi$-VS.

\section{Preliminaries}

We work with simple, undirected, unweighted finite graphs. We use standard terminology as presented in \cite{diestel}.

A $K_{1,3}$ is a graph formed of a vertex with three pairwise non-adjacent neighbors. It is also known as a \emph{claw}. In this case, the degree 3 vertex is known as a \emph{claw center} and each of its 3 neighbors is said to be \emph{leaf}. A graph is said to be \emph{claw-free} if it does not have an induced subgraph that is a claw. In this context, a claw is said to be an induced \emph{forbidden subgraph}.

As mentioned earlier, \emph{vertex splitting} is an operation that takes a vertex $v$ and replaces it with two new non-adjacent vertices $v_1$ and $v_2$ such that $N(v) = N(v_1) \cup N(v_2)$. When the two neighborhoods are disjoint, i.e., $N(v_1) \cap N(v_2) = \emptyset$, the operation is referred to as an \emph{exclusive vertex split}. Otherwise, it is known as an \emph{inclusive vertex split}. 

While this paper is about the complexity of splitting into a claw-free graph, we note that vertex splitting could sometimes introduce new claws. This can happen only when a vertex $v$ has three neighbors $x,y$ and $z$ such that 
$y$ and $z$ are non-adjacent and $N(x) \cap \{y,z\} \neq \emptyset$.
In this case, splitting $x$ so that one copy, say $x_1$, is adjacent to $v$ but $N(x_1)\cap \{y,z\} = \emptyset$ results in a claw centered at $v$.
Obviously, this is the only way a new claw can be introduced since splitting a vertex does not make it the center of a new claw.
%
%
%


We now formally define the \textsc{$K_{1,c}$-Free Vertex Splitting} problem parameterized by $k$, where $c$ is treated as a fixed constant.

\problemdef{\textsc{$K_{1,c}$-Free Vertex Splitting}}{A graph $G$ and a positive integer $k$.}{Can we perform at most $k$ vertex splitting operations such that we obtain a $K_{1,c}$-free graph?}

When $c=3$, the problem is known as \textsc{Claw-Free Vertex Splitting} (CVFS). We will generally refer to the problem as \textsc{$K_{1,c}$-Free Vertex Splitting} when referring to arbitrary fixed values of $c \geq 3$. Moreover, we refer to the variant of this problem where we consider only exclusive (resp., inclusive) vertex splits as \textsc{Claw-Free Exclusive Vertex Splitting} (resp., \textsc{Claw-Free Inclusive Vertex Splitting}). The complexity of each of these variants remains an open problem \cite{firbas2023establishing,firbas2024complexity}. We now begin by presenting the below simple observations that will be used throughout the paper:

\begin{observation}\label{obs:disconnectclaw}
A vertex split can remove a claw only if it is applied to its center vertex.
\end{observation}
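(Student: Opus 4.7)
The plan is a straightforward case analysis on which vertex of the claw (if any) is being split. Let the claw in question be induced on $\{v,a,b,c\}$ with center $v$ adjacent to the pairwise non-adjacent leaves $a,b,c$, and suppose we split some vertex $u$ into two new vertices $u_1,u_2$ with $N(u)=N(u_1)\cup N(u_2)$. I want to show that if $u\neq v$, then the resulting graph still contains an induced claw, so the split fails to destroy it.

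First, if $u\notin\{v,a,b,c\}$, then the four vertices $v,a,b,c$ are untouched, their pairwise adjacencies (and non-adjacencies) are preserved, and the claw survives verbatim. The only nontrivial case is therefore $u\in\{a,b,c\}$; by symmetry I take $u=a$. After the split, $a$ is replaced by $a_1,a_2$, and since $v\in N(a)=N(a_1)\cup N(a_2)$, at least one of the new copies, say $a_1$, is adjacent to $v$.

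The key step is to verify that $\{v,a_1,b,c\}$ still induces a claw. The edges $vb$ and $vc$ are present because $v$ and the leaves $b,c$ were not touched; the edge $va_1$ exists by the choice of $a_1$. For the non-edges: $b$ and $c$ are non-adjacent because they were non-adjacent in the original claw and neither was split. Moreover, $N(a_1)\subseteq N(a)$, and since $a$ was non-adjacent to both $b$ and $c$ in the original graph, neither $a_1b$ nor $a_1c$ is an edge. Hence $\{v,a_1,b,c\}$ is an induced $K_{1,3}$ in the new graph, and the claw is not removed.

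There is no real obstacle here; the only thing to be careful about is the subtle point that even if one copy of a split leaf loses its edge to the center, the other copy must retain it (since neighborhoods of $a_1,a_2$ cover $N(a)$), and the inclusion $N(a_i)\subseteq N(a)$ guarantees that no new adjacencies to $b$ or $c$ can spoil the induced-subgraph condition. Combining the two cases, the only split that can eliminate a given claw is a split of its center vertex.
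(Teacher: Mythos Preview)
Your proof is correct; the case analysis and the use of $N(a_i)\subseteq N(a)$ to rule out new edges from a split leaf to the other leaves is exactly the point. The paper itself states this observation without proof, treating it as evident, so your argument simply spells out the natural justification.
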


\begin{observation} \label{obs:cliquecontribution}
A claw can have at most one leaf vertex from a clique.
\end{observation}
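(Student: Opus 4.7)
The plan is to argue directly by contradiction from the definition of an induced claw. Recall that an induced $K_{1,3}$ consists of a center $c$ together with three leaves $\ell_1, \ell_2, \ell_3$ such that $c$ is adjacent to each $\ell_i$ and, crucially, the leaves are pairwise \emph{non-adjacent}, since the claw is required to appear as an induced subgraph.

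From here, I would suppose for contradiction that some clique $K$ of $G$ contains two (or more) leaves of the same induced claw, say $\ell_i, \ell_j \in K$ with $i \neq j$. By definition of a clique, every two vertices in $K$ are joined by an edge, so $\{\ell_i,\ell_j\} \in E(G)$. This directly contradicts the fact that the leaves $\ell_1,\ell_2,\ell_3$ form an independent set in $G$, concluding the proof.

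The argument is essentially a one-line unpacking of the definitions, and there is no real obstacle. The only point worth flagging is that the observation crucially uses that the claw is \emph{induced} (as fixed in the preliminaries); for a non-induced $K_{1,3}$ the statement would be false, since two leaves could be adjacent and hence lie in a common clique. Since the paper always works with induced forbidden subgraphs, no further care is needed.
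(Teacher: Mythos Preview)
Your argument is correct and matches the paper's treatment: the paper explicitly calls this observation trivial and gives no proof, relying on exactly the definitional fact you unpack, namely that the three leaves of an induced claw are pairwise non-adjacent and hence cannot share a clique. Your remark that the claw must be induced is also on point and consistent with how the paper uses the term.
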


The {\sc Claw-Free Vertex Splitting} problem is obviously solvable in polynomial time on graphs of maximum degree 3 or less. In fact, when a claw center is of degree exactly 3, then its 3 neighbors must be pairwise non-adjacent. In this case any splitting that results in a vertex of degree 2 and a pendant vertex can be applied, and it is obviously optimal (cannot create extra splits).

\section{Bounded Degree Graphs} \label{alg:deg4}

 \begin{theorem}
\textsc{Claw-Free Vertex Splitting} is solvable in polynomial-time on graphs of maximum degree 4.
\end{theorem}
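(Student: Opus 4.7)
The plan is to show that, starting from the input graph, applying the rule appropriate to each existing claw centre results in a claw-free graph using the fewest possible splits, and that the overall running time is polynomial. First I would argue correctness on a per-claw-centre basis. For each of the relevant cases, a small check shows that neither copy of $v$ produced by the prescribed split is itself a claw centre: in Case 2 the non-pendant component is a connected set of at most three vertices, so the copy inheriting it cannot carry three pairwise non-adjacent neighbours; in Cases 3 and 4 each copy receives at most two neighbours. Hence every application of the algorithm eliminates all claws centred at $v$.

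The main obstacle, and the core of the correctness argument, is to show that these splits introduce no new claws elsewhere. By Observation~\ref{obs:addclaw}, a new claw can arise only if an edge is removed from $G[N(u)]$ for some vertex $u$. For $u \notin N(v)$ the neighbourhood $N(u)$ is unchanged, so I would focus on $u \in N(v)$. For such a $u$, the only edges of $G[N(u)]$ that the split can affect are those of the form $(v,y)$ with $y \in N(v) \cap N(u)$. Since $u$ and $y$ are both in $N(v)$ and are adjacent in $G$, they lie in the same connected component of $G[N(v)]$; the algorithm therefore places them in the same copy of $v$, so the edge $(v,y)$ survives in the updated $G[N(u)]$. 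Consequently no neighbourhood loses an edge and no new claws appear.

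For optimality, Observation~\ref{obs:disconnectclaw} implies that every claw centre must be split in any valid solution, giving a lower bound equal to the number of distinct claw centres in the input. Since the algorithm performs exactly one split per claw centre and introduces no new claws, it matches this lower bound; because each prescribed split can be realised either exclusively or inclusively, the same procedure optimally solves both variants of CFVS. Finally, polynomial running time follows because each iteration strictly decreases the number of claw centres, so at most $|V(G)|$ splits are performed, the graph size stays $O(|V(G)|)$ throughout, and each iteration (finding a claw centre, classifying the components of its neighbourhood, and performing the split) is clearly polynomial.
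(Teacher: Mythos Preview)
Your proposal is correct and follows essentially the same approach as the paper: both rely on the component-preserving split rule, invoke Observation~\ref{obs:addclaw} to argue no new claws appear, and use Observation~\ref{obs:disconnectclaw} for the matching lower bound. Your write-up is in fact more explicit than the paper's own treatment---you spell out the edge-survival argument for $u\in N(v)$ and the per-iteration decrease in claw centres, which the paper leaves largely implicit.
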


\begin{proof}

The key step in the presented approach is to split a claw center $v$ in such a way that no extra splits are created. If a single split simultaneously eliminates all existing claws at $v$ and does not create extra splits, 
%
then the split is obviously optimal. Although multiple distinct splits may satisfy these conditions, we present one such split for each case below.

\subsubsection*{Case 1: $G[N(v)]$ has 1 connected component (see Figure \ref{fig:polyonecc}).}

In this case, without loss of generality, we split $v$ such that $N(v_1)=\{a\}$ and $N(v_2)=\{b,c,d\}$. Note that here the vertex $a$ is already a claw center with leaves $\{b,d,c\}$. Although the split will create an additional claw centered at $a$ since $G[N(a)]$ will now have 4 (pendant) connected components, the graph will still only need one extra split to become claw-free (see Case \ref{fig:polyfourcc}).

\begin{figure}[h]
    \centering
\begin{tikzpicture}[every node/.style={circle, draw, minimum size=8mm}, node distance=2cm]
  \node (v) at (0,0) {$v$};
  \node (a) [below=of v] {$a$};
  \node (b) [below left=1.2cm and 1.5cm of v] {$b$};
  \node (c) [below right=1.2cm and 1.5cm of v] {$c$};
  \node (d) [left=2cm of v] {$d$};

  \draw (v) -- (a);
  \draw (v) -- (b);
  \draw (v) -- (c);
  \draw (v) -- (d);

  \draw (a) -- (b);
  \draw (a) -- (c);
\draw[bend left=65] (a) to (d);
\end{tikzpicture}
    \caption{Case 1.}
    \label{fig:polyonecc}
\end{figure}

\vspace{5pt}

For the remaining cases, whenever have a claw center \(v\), we can always find a pendant vertex in \(G[N(v)]\). This enables us to split \(v\) in a way that avoids removing any edge that is part of a triangle. As a result, the performed split does not create extra splits.

\subsubsection*{Case 2: $G[N(v)]$ has 2 connected components (see Figure \ref{fig:polytwocc}).}

\noindent

Here, exactly 2 of the 3 dashed edges should be present in the graph to ensure $G[N(v)]$ has 2 connected components while $v$ is not claw-free. If we have less than 2, then we will have 3 connected components. If we have all 3 edges, we will not have a claw since $\{a,b,c\}$ will induce a clique (see Observation \ref{obs:cliquecontribution}). Regardless of which 2 edges are present, we can split $v$ such that $N(v_1)=\{p\}$ and $N(v_2)=\{a,b,c\}$.

\begin{figure}[h]
    \centering
    \begin{tikzpicture}[every node/.style={circle, draw, minimum size=8mm}, node distance=2cm]
  \node (v) at (0,0) {$v$};
  \node (p) [above=of v] {$p$};
  \node (a) [below left=1.2cm and 1.5cm of v] {$a$};
  \node (b) [below=1.8cm of v] {$b$};
  \node (c) [below right=1.2cm and 1.5cm of v] {$c$};

  \draw (v) -- (p);
  \draw (v) -- (a);
  \draw (v) -- (b);
  \draw (v) -- (c);

  \draw[dashed] (a) -- (b);
  \draw[dashed] (b) -- (c);
  \draw[dashed] (a) -- (c);
\end{tikzpicture}

    \caption{Case 2.}
    \label{fig:polytwocc}
\end{figure}

\subsubsection*{Case 3: $G[N(v)]$ has 3 connected components (see Figure \ref{fig:polythreecc}).}

\noindent
We can simply split $v$ such that $N(v_1)=\{p_1\}$ and $N(v_2)=\{a,b,p_2\}$.

\begin{figure}[h]
    \centering
\begin{tikzpicture}[every node/.style={circle, draw, minimum size=8mm}, node distance=2cm]
  \node (v) at (0,0) {$v$};
  \node (p1) [above left=of v] {$p_1$};
  \node (p2) [above right=of v] {$p_2$};
  \node (a) [below left=1.2cm and 1.5cm of v] {$a$};
  \node (b) [below right=1.2cm and 1.5cm of v] {$b$};

  \draw (v) -- (p1);
  \draw (v) -- (p2);
  \draw (v) -- (a);
  \draw (v) -- (b);

  \draw (a) -- (b);
\end{tikzpicture}

    \caption{Case 3.}
    \label{fig:polythreecc}
\end{figure}

\subsubsection*{Case 4: $G[N(v)]$ has 4 connected components (see Figure \ref{fig:polyfourcc}).}

\noindent
In this case, all connected components must be pendant vertices. We split $v$ such that $N(v_1)=\{p_1,p_2\}$ and $N(v_2)=\{p_3,p_4\}$.

\begin{figure}[h]
    \centering
\begin{tikzpicture}[every node/.style={circle, draw, minimum size=8mm}, node distance=2cm]
  \node (v) at (0,0) {$v$};
  \node (p1) [above left=of v] {$p_1$};
  \node (p2) [above right=of v] {$p_2$};
  \node (a) [below left=1.2cm and 1.5cm of v] {$p_3$};
  \node (b) [below right=1.2cm and 1.5cm of v] {$p_4$};

  \draw (v) -- (p1);
  \draw (v) -- (p2);
  \draw (v) -- (a);
  \draw (v) -- (b);

\end{tikzpicture}

    \caption{Case 4.}
    \label{fig:polyfourcc}
\end{figure}

\vspace{5pt}

\end{proof}
 
A successive application of the above steps is guaranteed to result in a claw-free graph. Moreover, since we use the minimum number of splits (one per claw-center), our algorithm solves the CFVS problem (not only the exclusive vertex-splitting variant).

\section{Complexity of {\sc Claw-Free Exclusive Vertex Splitting}}

The approach described above for obtaining a polynomial-time algorithm on graphs with maximum degree 4 does not necessarily extend to graphs with maximum degree more than 4. In such graphs, we may have a vertex $v$ of degree at least 5 for which $G[N(v)]$ does not contain a pendant vertex. Furthermore, in some instances, (locally) eliminating a claw may require multiple vertex splits, thereby increasing the difficulty of designing a polynomial-time algorithm.

We prove that {\sc Claw-Free Exclusive Vertex Splitting} (CFEVS) is $\NP$-complete. Our proof is based on a reduction from the {\sc Cubic Vertex Cover} problem, which is formally defined as follows:.


\problemdef{\textsc{Cubic Vertex Cover (CVC)}}
{A positive integer $t$ and a 
3-regular graph $G=(V,E)$;}{Does $G$ have a vertex cover of size at most $t$?}

The above problem is $\NP$-complete (See Fact 7.2 in \cite{greenlaw1995cubic}). We further assume that $t>2$.

\begin{theorem}
   \textsc{Claw-Free Exclusive Vertex Splitting} is 
   $\NP$-hard.
\end{theorem}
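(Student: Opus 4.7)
The plan is to reduce from \textsc{E-3-Bounded Two-Hitting-Set} (viewed equivalently as vertex cover on a cubic graph with vertex set $U$ and edge set $S$). Given an instance $(U,S,t)$, I will build an auxiliary graph $G'$ and set a budget $k = t$ so that $G'$ can be made claw-free by at most $k$ exclusive vertex splits if and only if $(U,S,t)$ is a \textsc{Yes}-instance. The construction is gadget-based, using two families: for each element $v \in U$, a \emph{selection gadget} with a distinguished center $c_v$; and for each pair $e = \{u,v\} \in S$, an \emph{edge claw} -- an induced $K_{1,3}$ -- whose leaves are attached to the selection gadgets of $u$ and $v$. The design goal is that the only budget-respecting way to destroy the edge claw of $e$ is to split either $c_u$ or $c_v$, thereby committing $u$ or $v$ to the hitting set.

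Concretely, I envision each selection gadget ensuring that $c_v$ has exactly three sockets in its neighborhood, one for each pair in $S$ containing $v$, with the sockets pairwise non-adjacent so that an induced $K_{1,3}$ is present at $c_v$; any exclusive split of $c_v$ then separates at least one socket from the other two and destroys this claw. Each edge claw centered at a fresh vertex $x_e$ has as its three leaves two sockets linked through $c_u$ and $c_v$ together with a private pendant leaf; the key structural point, which must be enforced by surrounding $x_e$ with clique-like shells that invoke Observation~\ref{obs:cliquecontribution}, is that an exclusive split of $x_e$ itself is effectively forbidden because it would remove an edge inside the neighborhood of some other vertex and thereby spawn a new claw, as warned by Observation~\ref{obs:addclaw}. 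Combined with Observation~\ref{obs:disconnectclaw}, which states that a claw is removable only by splitting its center, this forces every edge claw to disappear only as a side effect of splitting one of its two associated selection centers.

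The forward direction is then routine: from a hitting set $H$ of size at most $t$, split $c_v$ for each $v \in H$, and every edge claw is destroyed by at least one of these splits without introducing new claws. The backward direction is the delicate one: from an arbitrary valid solution of at most $t$ exclusive splits, I would argue that each split is effectively a selection-gadget split and that the chosen $v$'s cover every pair in $S$. This will be the main obstacle -- ruling out cleverly placed "sideways" splits that might handle two edge constraints for the price of one and thereby beat the hitting-set bound. The standing assumption that no single pair in $S$ is itself a hitting set is precisely the device that rules out such degenerate shortcuts, since any such shortcut would correspond to an edge gadget that single-handedly sees every constraint, a situation equivalent to one pair in $S$ hitting all the others. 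Granting this, the equivalence yields \NP-hardness.
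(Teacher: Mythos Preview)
Your proposal has a structural gap rooted in Observation~\ref{obs:disconnectclaw}, which you cite but then contradict. You place an edge claw at a fresh center $x_e$ and assert that ``the only budget-respecting way to destroy the edge claw of $e$ is to split either $c_u$ or $c_v$.'' But by Observation~\ref{obs:disconnectclaw} the \emph{only} way to remove a claw is to split its center; an exclusive split of $c_u$ or $c_v$ redistributes only the edges incident to that vertex and cannot touch the claw centered at $x_e$, no matter how the surrounding shells are arranged. The same observation breaks your forward direction a second time: you explicitly install an induced $K_{1,3}$ at every selection center $c_v$, so every $c_v$ must itself be split. With budget $k=t$ you can afford at most $t$ of the $|U|$ selection splits, and the claws at the remaining $|U|-t$ centers survive; the constructed instance is therefore a \textsc{No}-instance whenever $t<|U|$, independently of whether $(U,S,t)$ is a \textsc{Yes}-instance. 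The intended equivalence collapses already in the easy direction, and the standing assumption about no single pair being a hitting set does not address either issue.

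The paper sidesteps this by inverting the roles and inflating the budget to $k=|S|+t$. Each \emph{subset} gets a vertex $s_i$; all subset vertices form a clique $A$, and each $s_i$ is a claw center, forcing $|S|$ splits. A second clique $B$ of size $|S|+t+1$, fully joined to $A$ and with a pendant on every vertex, rigidly dictates the shape of every split of an $s_i$: one copy must retain all of $A\cup B$, while the other copy may be adjacent only to element vertices (otherwise more than $k$ claw centers appear inside $B$, an immediate \textsc{No}). Those detached copies then create claws at the element vertices they touch, and the residual $t$ splits suffice precisely when the touched elements form a hitting set of size at most $t$. The key device---using a large adjacent clique with pendants to force how a vertex may split---is exactly what your sketch lacks, and it is necessary because splitting can only relocate edges at the split vertex, never eliminate a claw centered elsewhere.
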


\begin{proof}

Let $(G,t)$ be an arbitrary instance of CVC, we construct an equivalent instance $(G',k)$ of CFEVS as follows:

\begin{itemize}
    \item[-] For each vertex $v_i \in V(G)$, we create a corresponding \emph{link vertex} $l_i$. Let $L = \{l_i \mid v_i \in V(G)\}$ denote the set of all link vertices.

    \item[-] Each link vertex $l_i \in L$ is adjacent to a pendant vertex $p_i$.
    
    \item[-] For each edge $e_i \in E(G)$, we create a corresponding \emph{main vertex} $m_i$. Let $M = \{m_i \mid e_i \in E(G)\}$ be the set of all main vertices. We add an edge between every pair of vertices in $M$, so that the subgraph $G'[M]$ is a clique.

    \item[-] We add an edge between a link vertex $l_i$ and a main vertex $m_j$ if and only if the vertex $v_i$ (corresponding to $l_i$) is an endpoint of the edge $e_j$ (corresponding to $m_j$) in $G$.

    \item[-] For each $i  \in [1, |E(G)| + t + 1]$, we create a \emph{help vertex} $h_i$. Let $H = \{h_i \mid 1 \le i \le |E(G)| + t + 1\}$. We add an edge between every pair of vertices in $H$, so that $G'[H]$ is a clique.

    \item[-] Each help vertex $h_i \in H$ is adjacent to all main vertices in $M$. Additionally, it is adjacent to a pendant vertex $z_i$.
    
    \item[-] We let $k=|E(G)|+t$.
\end{itemize}

An example of the construction is illustrated in Figure \ref{fig:constructionk13proof}.

\begin{figure}[htb!]
\centering
\begin{tikzpicture}[
    scale = 1,
    vertex/.style={circle, draw, minimum size=7mm},
    node distance=2cm
]

\node[vertex] (1) at (0,2) {$l_1$};
\node[vertex] (2) at (2,2) {$l_2$};
\node[vertex] (3) at (4,2) {$l_3$};
\node[vertex] (4) at (6,2) {$l_4$};

\node[vertex] (p1) at (0,3) {$p_1$};
\node[vertex] (p2) at (2,3) {$p_2$};
\node[vertex] (p3) at (4,3) {$p_3$};
\node[vertex] (p4) at (6,3) {$p_4$};

\draw[thick] (1) -- (p1);
\draw[thick] (2) -- (p2);
\draw[thick] (3) -- (p3);
\draw[thick] (4) -- (p4);

\node[] (E) at (-3,2) {$L$};
\node[] (A) at (-3,0) {$M$};
\node[] (B) at (-3,-2) {$H$};
\node[vertex] (s1) at (0,0) {$m_1$};
\node[vertex] (s2) at (1.2,0) {$m_2$};
\node[vertex] (s3) at (2.4,0) {$m_3$};
\node[vertex] (s4) at (3.6,0) {$m_4$};
\node[vertex] (s5) at (4.8,0) {$m_5$};
\node[vertex] (s6) at (6,0) {$m_6$};

\draw[thick] (s1) -- (s2);
\draw[thick] (s2) -- (s3);
\draw[thick] (s3) -- (s4);
\draw[thick] (s4) -- (s5);
\draw[thick] (s5) -- (s6);
\draw[thick] (s1) to[bend right=30] (s3);
\draw[thick] (s1) to[bend right=40] (s4);
\draw[thick] (s1) to[bend right=50] (s5);
\draw[thick] (s1) to[bend right=55] (s6);
\draw[thick] (s2) to[bend right=30] (s4);
\draw[thick] (s2) to[bend right=40] (s5);
\draw[thick] (s2) to[bend right=50] (s6);
\draw[thick] (s3) to[bend right=30] (s5);
\draw[thick] (s3) to[bend right=40] (s6);
\draw[thick] (s4) to[bend right=30] (s6);

\draw[red, thick] (s1) -- (1);
\draw[red, thick] (s1) -- (2);
\draw[blue, thick] (s2) -- (1);
\draw[blue, thick] (s2) -- (3);
\draw[green, thick] (s3) -- (1);
\draw[green, thick] (s3) -- (4);
\draw[orange, thick] (s4) -- (2);
\draw[orange, thick] (s4) -- (3);
\draw[purple, thick] (s5) -- (2);
\draw[purple, thick] (s5) -- (4);
\draw[brown, thick] (s6) -- (3);
\draw[brown, thick] (s6) -- (4);

\node[vertex] (t0) at (-1.2,-2) {$h_1$};
\node[vertex] (t1) at (0,-2) {$h_2$};
\node[vertex] (t2) at (1.2,-2) {$h_3$};
\node[vertex] (t3) at (2.4,-2) {$h_4$};
\node[vertex] (t4) at (3.6,-2) {$h_5$};
\node[vertex] (t5) at (4.8,-2) {$h_6$};
\node[vertex] (t6) at (6,-2) {$h_7$};
\node[vertex] (t7) at (7.2,-2) {};

\node[vertex] (pt0) at (-1.2,-4.1) {$z_1$};
\node[vertex] (pt1) at (0,-4.1) {$z_2$};
\node[vertex] (pt2) at (1.2,-4.1) {$z_3$};
\node[vertex] (pt3) at (2.4,-4.1) {$z_4$};
\node[vertex] (pt4) at (3.6,-4.1) {$z_5$};
\node[vertex] (pt5) at (4.8,-4.1) {$z_6$};
\node[vertex] (pt6) at (6,-4.1) {$z_6$};
\node[vertex] (pt7) at (7.2,-4.1) {};

\foreach \i in {0,...,7} {
  \draw[thick] (t\i) -- (pt\i);
}

\foreach \i in {0,...,7} {
  \foreach \j in {1,...,6} {
    \draw[gray, thin] (t\i) -- (s\j);
  }
}

\foreach \i/\j in {0/1,1/2,2/3,3/4,4/5,5/6} {
  \draw[blue, thick] (t\i) -- (t\j);
}
\draw[blue, thick, dashed] (t6) -- (t7);

\draw[blue, thick] (t0) to[bend right=20] (t2);
\draw[blue, thick] (t0) to[bend right=25] (t3);
\draw[blue, thick] (t0) to[bend right=30] (t4);
\draw[blue, thick] (t0) to[bend right=35] (t5);
\draw[blue, thick] (t0) to[bend right=40] (t6);
\draw[blue, thick] (t0) to[bend right=41] (t7);

\draw[blue, thick] (t1) to[bend right=20] (t3);
\draw[blue, thick] (t1) to[bend right=25] (t4);
\draw[blue, thick] (t1) to[bend right=30] (t5);
\draw[blue, thick] (t1) to[bend right=35] (t6);
\draw[blue, thick] (t1) to[bend right=40] (t7);

\draw[blue, thick] (t2) to[bend right=20] (t4);
\draw[blue, thick] (t2) to[bend right=25] (t5);
\draw[blue, thick] (t2) to[bend right=30] (t6);
\draw[blue, thick] (t2) to[bend right=35] (t7);

\draw[blue, thick] (t3) to[bend right=20] (t5);
\draw[blue, thick] (t3) to[bend right=25] (t6);
\draw[blue, thick] (t3) to[bend right=30] (t7);

\draw[blue, thick] (t4) to[bend right=20] (t6);
\draw[blue, thick] (t4) to[bend right=25] (t7);

\draw[blue, thick] (t5) to[bend right=20] (t7);

\coordinate (brace_start) at ([xshift=-5pt]t0.south west);
\coordinate (brace_end) at ([xshift=5pt]t7.south east);
\draw [decorate, decoration={brace,amplitude=10pt,mirror,raise=4pt}]
    (brace_start |- 0,-4.25) -- (brace_end |- 0,-4.25)
    node [black,midway,yshift=-0.75cm] {$|E(G)|+t+1$};

\end{tikzpicture}
\caption{Construction for the instance where the given graph is a $K_4$.}
\label{fig:constructionk13proof}
\end{figure}

\vspace{5pt}
\noindent
{\bf Claim.}
A given instance of \textsc{CVC} is a YES-instance if and only if its corresponding constructed instance of \textsc{CFEVS} is also a YES-instance.

\vspace{5pt}
\noindent
($\Rightarrow$)
Observe that each main vertex $m_i$ needs to split since we have claws centered at it (see Observation \ref{obs:disconnectclaw}). 
Initially, these are the only claw centers in the graph. Let $VC$ be a minimal vertex cover of size at most $t$. We split each main vertex $m_i$ into two vertices $m_{i1}$ (copy one) and $m_{i2}$ (copy two) so that $m_{i1}$ is adjacent to only one vertex from $L$ representing an element of $VC$, while 
$m_{i2}$ will be adjacent to all the remaining vertices in $N(m_i)$. We repeat this process for every main vertex while making sure that a vertex of $L$ is adjacent to the same copy (one or two) whenever we split a main vertex. For example, if $l_1\in VC$, then each time we split a main vertex $m_i$ that is adjacent to $l_1$, we should have $N(m_{i1})=l_1$ and $N(m_{i2})=N(m_{i})\setminus \{l_1\}$. When the edge represented by a main vertex has both of its endpoints in $VC$, the neighbor of its copy-one vertex can be any of its two elements. 

In total, we have used $|M|=|E(G)|$ splits until now, resulting in $|E(G)|$ additional vertices that form an independent set of pendant vertices, namely $A'=\{m_{i1}: i\in M\}$. The unique neighbor of each element of $A'$ is a vertex that corresponds to an element of $VC$, as per the above splitting scenario. Observe also that the (at most) $t$ vertices that correspond to the elements of $VC$ become claw centers after the splitting operations performed so far, and these are the only claw centers in the resulting graph. Since each of these centers is of degree exactly 4, each claw can be disconnected by performing exactly one split (for example, by creating a copy that is adjacent to the pendant vertex $p_i$ only, as in the algorithm described in Section \ref{alg:deg4}). 

\vspace{5pt}
\noindent
($\Leftarrow$) First, let us prove that we can have a solution to the CFEVS instance with at most $|E(G)|+t$ splits. Initially, we need at least $|E(G)|$ splits since there are claws centered at each main vertex (see Observation \ref{obs:disconnectclaw}).
If a main vertex $m_i$ is split into $m_{i1}$ and $m_{i2}$, then let $M_j = N_M(m_{ij})$ and $H_j = N_H(m_{ij})$, $j \in \{1,2\}$. Of course, $M_1\cap M_2 = H_1\cap H_2 = \emptyset$
(simply because we are dealing with exclusive splits). We note the following:

\noindent
(i) If none of the sets $H_1$ and $H_2$ is empty, then each vertex $h_i\in H_j$ is a claw center with leaves $m_{ij}$ along with ``any'' element of $H_{3-j}$ and the pendant neighbor of $h_i$, which yields a NO-instance. Therefore, without loss of generality, we may assume that $H_1 = \emptyset$ and $H_2 = H$.

\noindent
(ii) If none of the sets $M_1$ and $M_2$ is empty then each vertex $h_i\in H$ is a claw center with leaves $m_{i2}$ along with ``any'' element of $M_1$ and the pendant neighbor of $H$, which again yields a NO-instance. Therefore, without loss of generality, we may assume that $M_1 = \emptyset$. In other words, one copy of $m_i$, which we chose to be $m_{i1}$, must be totally disconnected from $M\cup H$. This means that none of the (remaining) elements of $M\cup H$ will become a claw center after the (described) split of any of the main vertices. From this point on, only the elements of $L$ can become claw centers. 

\noindent
(iii) Assume the $i^{th}$ edge of $G$, represented by vertex $m_i \in G'$ has $\{e_{i1}, e_{i2}\}$ at its endpoints. Then either $m_{i1}$ is adjacent to a single element of this subset or to both of them.
In the latter case, 2 claws are obtained from the split of $m_i$, with centers $e_{i1}$ and $e_{i2}$. Since the total number of remaining splits cannot exceed $t$, we assume that we have at most $t$ elements of $L$ that are claw centers. 
Without loss of optimality, and to actually minimize the number of remaining claw centers, we can dictate that each vertex $m_{i1}$ is adjacent to exactly one element from $L$. Therefore, for the CFEVS instance to be a YES-instance, there must be at most $t$ elements of $L$ that can be split and each vertex $m_{i1}$ must be adjacent to one of them. It follows that the (at most) $t$ claw centers of $L$ form a vertex cover of $G$.  
\end{proof}

Since the reduction is done in polynomial time and the problem is obviously in $\NP$, then:

\begin{corollary}
\textsc{$K_{1,3}$-Free Exclusive Vertex Splitting} is $\NP$-complete.
\end{corollary}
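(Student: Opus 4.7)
The plan is to observe that \textsc{NP}-hardness is already in hand from the preceding theorem, so the only thing left to establish is membership in \textsc{NP}. Since the reduction from \textsc{E-3-Bounded Two-Hitting-Set} is explicitly polynomial-time (the construction adds $O(|S|+t)$ vertices and $O(|S|^2)$ edges), the hardness statement transfers directly.

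For the \textsc{NP} membership, I would take as a certificate an explicit list of at most $k$ vertices to be split together with, for each such vertex $v$, a partition $(N_1, N_2)$ of $N(v)$ describing the exclusive split. The verifier first checks that the list has size at most $k$ and that each partition is indeed a partition (so that the split is exclusive). It then constructs the resulting graph $G'$ in polynomial time by replacing each listed $v$ with $v_1, v_2$ and attaching them according to the specified partition. Finally, it checks that $G'$ is claw-free by enumerating all $4$-tuples of vertices and verifying that none of them induces a $K_{1,3}$; this runs in $O(|V(G')|^4)$ time.

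Putting these together gives a polynomial-time verification procedure, so the problem lies in \textsc{NP}; combined with the \textsc{NP}-hardness from the preceding theorem, this yields the corollary. I do not expect any genuine obstacle here — the only modest subtlety is being precise about what a certificate looks like (since the splits themselves, not just the chosen vertices, must be specified to recover $G'$), and this is handled by including the neighborhood partition alongside each vertex in the certificate.
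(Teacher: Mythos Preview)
Your proposal is correct and matches the paper's approach: the paper simply notes that the preceding reduction is polynomial-time and that the problem is ``obviously in $\NP$,'' so your argument is in fact more explicit than the original. One small refinement worth making: the certificate should be an \emph{ordered sequence} of at most $k$ splits rather than a set of vertices of $G$, since the problem counts $k$ splitting \emph{operations} and a later operation may be applied to a copy produced by an earlier one (cf.\ Observation~\ref{2k+2}, where a single vertex may be split up to $k$ times); recording at step $i$ the vertex in the current graph together with the bipartition of its current neighborhood fixes this and keeps the verifier polynomial.
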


We also state the following corollary:

\begin{corollary}
 $K_{1,c}$-Free Exclusive Vertex Splitting is $\NP$-hard for all $c \geq 3$.
\end{corollary}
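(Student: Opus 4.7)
The plan is to generalize the reduction of the previous theorem by rescaling its gadgets to match the $K_{1,c}$ constraint. I would reduce from a natural generalization of cubic vertex cover: a hypergraph vertex cover problem in which every hyperedge has size $c-1$ and every element participates in exactly three hyperedges, which is $\NP$-hard for every fixed $c\geq 3$ by standard reductions (and coincides with \textsc{E-3-Bounded Two-Hitting-Set} when $c=3$).

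Concretely, I would adapt the graph construction as follows. Each subset vertex $s_i$ is connected to the $c-1$ element vertices of its hyperedge, so that the maximum independent set in $N(s_i)$ has size $(c-1)+1=c$ and $s_i$ is a $K_{1,c}$-center. Each element vertex $e_u$ now carries $c-2$ pendants (one original $p_u$ plus $c-3$ extras), and each vertex $b\in B$ of the $(|S|+t+1)$-clique also carries $c-2$ pendants. The budget stays $|S|+t$. Under these scaled parameters, the three key arguments of the previous proof go through. The bad-split arguments (i) and (ii) survive because in any split of $s_i$ in which $B$ or $A\setminus\{s_i\}$ is distributed nontrivially, every $b\in B$ becomes a $K_{1,c}$-center: the $c-2$ pendants of $b$, together with one copy of $s_i$ and one vertex from the ``opposite'' $B$- or $A$-part, form $c$ pairwise non-adjacent neighbors of $b$, and since $|B|=|S|+t+1>t$ this exceeds the budget. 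Argument (iii) is tightened by the hyperedge size: because $|i|=c-1$, at least one of its element vertices must go to $s_{i1}$ (else $s_{i2}$ still has a $c$-element independent neighborhood). Finally, the $c-3$ extra pendants on each $e_u$ are exactly what is needed so that, after the $|S|$ subset splits, $e_u$ is a $K_{1,c}$-center precisely when some $s_{i1}$ is adjacent to $e_u$, recovering the exact hitting-set correspondence; each such $e_u$ can be resolved by a single exclusive split, matching one unit of the remaining budget $t$.

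The main obstacle I expect is verifying this parameter tuning carefully: that $c-3$ pendants on $e_u$ together with $c-2$ pendants on $b$ give precisely the right thresholds without introducing spurious $K_{1,c}$-centers elsewhere (the pendants themselves have degree one, and in a ``good'' split each $b$ retains max independent neighborhood $1+(c-2)=c-1<c$). A secondary and routine step is invoking $\NP$-hardness of the generalized $(c-1)$-uniform $3$-regular hypergraph vertex cover, which follows by a standard reduction, and observing that the overall construction, like the one for $c=3$, remains polynomial.
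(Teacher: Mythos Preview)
Your approach is correct and essentially coincides with the paper's: both generalize the $c=3$ reduction by taking subsets of size $c-1$ and attaching $c-2$ pendants to every element vertex and to every vertex of the clique $B$, and both verify that arguments (i)--(iii) scale. The only difference is the regularity parameter of the source problem: the paper reduces from \textsc{E-$c$-Bounded $(c-1)$-Hitting Set} (each element in exactly $c$ subsets), while you keep the element degree at~$3$; either choice works, since the subset neighbours of an element vertex always lie in the clique $A$ and hence contribute only one vertex to any independent set in its neighbourhood.
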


To prove the above corollary, simply apply the same construction from \textsc{Vertex Cover} on $c$-regular hypergraphs where each edge joins $c-1$ vertices\footnote{This problem is $\NP$-hard being a generalization of \textsc{Cubic Vertex Cover $(c=3)$}.}, except that each element of $L\cup H$ will have $c-2$ pendant neighbors.

\section{Summary}

We proved that \textsc{$K_{1,c}$-Free Exclusive Vertex Splitting} is $\NP$-complete for all $c \geq 3$. We also proved that \textsc{$K_{1,3}$-Vertex Splitting} is solvable in polynomial time on graphs of bounded degree 4. 
We leave the following questions as open problems: Is \textsc{Claw-Free Vertex Splitting} $\FPT$ with respect to the number of splits? And, if so, does it have a polynomial-size kernel?

\section*{Funding}

This research project was partially supported by the Lebanese American University under the President’s Intramural Research Fund PIRF0056.

\end{document}